\documentclass[11pt,letterpaper]{article}
\usepackage{setspace}

\usepackage[dvips]{graphicx}
\usepackage[cmex10]{amsmath}
\usepackage{amssymb}
\usepackage{amsthm}
\usepackage{amsfonts}
\usepackage{bm}
\usepackage{cite}
\usepackage[svgnames]{xcolor} 
\usepackage{pstricks,pst-node,pst-plot,pstricks-add}
\usepackage{algpseudocode}
\usepackage{algorithm,algcompatible}
\usepackage{subcaption}

\allowdisplaybreaks

\pagestyle{plain}

\setstretch{1.2}
\textwidth 6.5 in
\oddsidemargin 0.0 in
\evensidemargin  0.0 in
\textheight 9.2 in
\topmargin -0.8 in

\algnewcommand\INPUT{\item[\textbf{Input:}]}
\algnewcommand\OUTPUT{\item[\textbf{Output:}]}
\algnewcommand\INIT{\item[\textbf{Initialize:}]}
\algnewcommand\ON{\item[\textbf{on}]}
 \algnewcommand{\OR}{\textbf{or}}
  \algnewcommand{\AND}{\textbf{and}}

\newtheorem{theorem}{Theorem}

\newtheorem{remark}{Remark}

\newtheorem{corollary}[theorem]{Corollary}

\makeatletter
\newcommand\nocaption{%
    \renewcommand\p@subfigure{}
    \renewcommand\thesubfigure{\thefigure\alph{subfigure}}
}
\makeatother

\begin{document}
\title{Combating Computational Heterogeneity in Large-Scale Distributed Computing via Work Exchange}
\author{Mohamed Adel Attia \qquad Ravi Tandon \\
\normalsize Department of Electrical and Computer Engineering\\
\normalsize  University of Arizona, Tucson, AZ 85721\\
\normalsize E-mail:\em{\{madel, tandonr\}@email.arizona.edu} }

\maketitle \thispagestyle{empty}

\newcommand\blfootnote[1]{%
  \begingroup
  \renewcommand\thefootnote{}\footnote{#1}%
  \addtocounter{footnote}{-1}%
  \endgroup
}

\blfootnote{This work was supported by the NSF grant CAREER-1651492.}

\vspace{-28pt}
\begin{abstract}
Owing to data-intensive large-scale applications, distributed computation systems have gained significant recent interest, due to their ability of running such tasks over a large number of commodity nodes in a time efficient manner. One of the major bottlenecks that adversely impacts the time efficiency is the  \textit{computational heterogeneity} of distributed nodes, often limiting the task completion time due to the slowest worker. 
 In this paper, we first present a lower bound on the expected computation time based on the work-conservation principle.  We then present our approach of \textit{work exchange} to combat the latency problem, in which faster workers can be reassigned additional leftover computations that were originally assigned to slower workers. We present two variations of the work exchange approach: a) when the computational heterogeneity knowledge is known a priori; and b) when heterogeneity is unknown and is estimated in an online manner to assign tasks to distributed workers. As a baseline, we also present and analyze the use of an optimized Maximum Distance Separable (MDS) coded distributed computation scheme over heterogeneous nodes. Simulation results also compare the proposed approach of work exchange, the baseline MDS coded scheme and the lower bound obtained via work-conservation principle.  We show that the work exchange scheme achieves time for computation which is very close to the lower bound with limited coordination and communication overhead even when the knowledge about heterogeneity levels is not available.
\end{abstract}

%

\vspace{-18pt}
\section{Introduction}
\label{sec:Introduction}
\vspace{-3pt}

Due to the astonishing expansion of the data production in the recent years, processing of large scale data-sets over a centralized single machine is widely becoming infeasible. This growth moves the computational paradigm towards large distributed computation, which can enable the processing of data-intensive tasks for machine learning, and data mining over a large number of  commodity machines in a time-efficient manner (e.g., Apache Spark \cite{Apache2010}, and MapReduce \cite{MapReduce2004}).
One of the concerns related to large-scale distributed computation is due to \textit{stragglers}, which are a small set of machines/nodes that can unexpectedly run slower than the average or can be even disconnected due to several factors such as resource contention, network or disk failures, and power limits \cite{Stragglers1,Stragglers2}. 
Being buckled under faults, the task will be delayed or even never completed when the master node has to wait for extended periods pf time until the stragglers reply.

In order to mitigate the impact of stragglers, various approaches have been proposed from several research communities, as discussed next. 
One approach is to efficiently detect the stragglers while running computational tasks, and then relaunch the delayed tasks on other machines \cite{Stragglers2,Stragglers3}.
Another set of approaches attack this problem by introducing redundancy in computation. The simplest form of redundancy, i.e., repetition was proposed in \cite{Stragglers/replicas1,Stragglers/replicas2}, where each repeated task can combat one straggler doing the same task.
However, a more efficient approach to introduce redundancy is by leveraging codes. In particular, the idea of using $(n,k)$ Maximum Distance Separable (MDS) coded distributed computations was presented in \cite{CodedComp-Shuffle2017} for linear distributed computations such as matrix multiplication, where $n$ is the number of workers.
The computation time is then limited by the $k^{th}$ (instead of the $n^{th}$) fastest worker. The performance can also be quantified in terms of the \textit{recovery threshold} metric, where any $k$ out of $n$ workers are sufficient to complete the whole underlying task, and hence the system can tolerate up to any $(n-k)$ stragglers.
The result in \cite{CodedComp-Shuffle2017} was extended in  \cite{Stragglers-MDSproduct} for high dimensional matrix multiplication. More recently, a new class of codes (polynomial codes) were proposed in \cite{Stragglers-Polynomial}, which achieve the optimal recovery threshold for high dimensional matrix multiplication. Moreover, in \cite{CDC1}, and \cite{CDC2}, an efficient coded framework was proposed using MDS codes for distributed matrix multiplication in the MapReduce setting. 
In \cite{Stragglers-Conv}, the authors proposed the use of codes in parallelly computing the convolution of two long vectors before a deadline in the presence of stragglers.
In addition to coded computation, there are other other hardware based approaches approaches to combat computational heterogeneity by explicitly attempting to reduce the unevenness and variability related to hardware effects, e.g., \cite{Stragglers/HW}.
Another approach (with roots in operations research commpunity) is that of  \textit{work stealing}, where the faster workers  take over the remaining computations from the slower workers once they finish their tasks leaving the slower workers idle till the end of the computation session \cite{work-stealing}.

At the core of the straggler problem is the \textit{heterogeneity of computation across the workers}, i.e., different workers in the cluster may have different computational capabilities. The recently proposed approaches that employ codes to combat this problem have the drawback that they only use the computations from a subset of workers, and hence, the computations performed from the remaining nodes are not used at all. Furthermore, other approaches such as work stealing and variations suffer from the drawback that slower workers can remain idle for certain periods of time. In summary,  all of the aforementioned approaches suffer either from redundant computations or idle workers not participating for a period of time. 
 In this work, we explore a series of fundamental questions related to this problem: 

\begin{itemize}
\item \textit{$Q_1$: If no redundant computations are permitted, nor any worker remains idle at any point, what is the minimum time for distributed computation?} To answer this question, we present an orcale lower bound on the computation time which is a function of the underlying computational speeds of the workers (i.e., the parameters describing the computational heterogeneity).  

\item \textit{$Q_2$: How do we devise schemes that can approach the oracle lower bound?} To answer this question, we present a new idea called \textit{work exchange}, which attempts to minimize redundant computations, by reassigning and exchanging computational tasks across workers. This work exchange approach is fundamentally different from the coded computation approaches in two ways: a) it attempts to minimize redundant computation and leverage the knowledge of computational heterogeneity in the system; and b) it works for linear or non-linear computational tasks as long as the original task is sub-divisible into large number of smaller sub-tasks. 

\item \textit{$Q_3$: What are the resources that can be traded-off in order to minimize the computation time?}  Coded distributed computation approaches (in particular, the ones using MDS codes) require excess number of workers (in contrast to waiting for the slowest worker), however they do not require intermediate coordination once the tasks are assigned. On the other hand, the proposed work exchange approach leads to a reduction in the computation time, however, it requires more coordination/communication overhead for the re-assignment of tasks.   Thus, the fundamental question is that what are the tradeoffs between communication, storage and coordination in order to achieve a certain time for distributed computation.

\end{itemize}


To give an intuition behind leveraging the knowledge of computational heterogeneity, let us consider a matrix multiplication problem where we need to compute $Ax$, where $A\in\mathbb{R}^{200\times d}$ and $x\in \mathbb{R}^{d\times 1}$ over a cluster of three distributed workers. Clearly, this problem can be divided into $200d$ elementary multiplication operations. The three workers are heterogeneous and for simplicity, let us assume that they work at a constant rate  $\lambda_1=d$, $\lambda_2=3d$, and $\lambda_3=6d$ operations per second for workers 1, 2, and 3 respectively. 

\noindent \textit{\underline{Heterogeneity unaware MDS Coded Computation}:} The coding based approach works as follows: we first divide the matrix $A$ across rows into two equal sub-matrices $A^{(1)}_1$, and $A^{(1)}_2$ of size $100\times d$ each. Then, we use a $(3,2)$ MDS code over these sub-matrices to generate $3$ coded matrices, e.g., $A^{(1)}_1$, $A^{(1)}_2$, and $A^{(1)}_1+A^{(1)}_2$, to be assigned to the three workers as shown in Figure~\ref{fig:example}a. Every worker then multiplies the assigned coded matrix with $x$, which needs $100d$ operations for each worker. According to the speed of each worker, the master node receives the replies from workers 1, 2, and 3 after $100$, $33.33$, and $16.67$ seconds, respectively.
However, due to the MDS property, any 2 out of 3 replies are enough to decode $A^{(1)}_1x$, and $A^{(1)}_2 x$, yielding the completion time to be $33.33$ seconds.

\noindent \textit{\underline{Heterogeneity aware Distributed Computation}:} Consider now the following approach, where instead of adding redundancy, we break the matrix $A$ into three unequal sized sub-matrices $A^{(2)}_1$, $A^{(2)}_2$, and $A^{(2)}_3$ of sizes $20\times d$, $60\times d$, and $120\times d$, respectively, and then assign them to workers 1, 2, and 3, respectively as shown in Figure~\ref{fig:example}b.
According to the speed of each worker, the master node receives the  replies from the three workers simultaneously after $20$ seconds, which is less than $33.33$ seconds achieved using the coding approach. From this simple example, we make the following key observations: a) if we are aware of computational heterogeneity (or if it can be estimated), then we can leverage such prior heterogeneity knowledge in order to balance out the computational operations assigned to the workers;  b) ignoring slow workers or introducing redundant computations is not always optimal from a latency perspective.

\begin{figure}[t]
  \begin{center}
  \includegraphics[width=0.85\columnwidth]{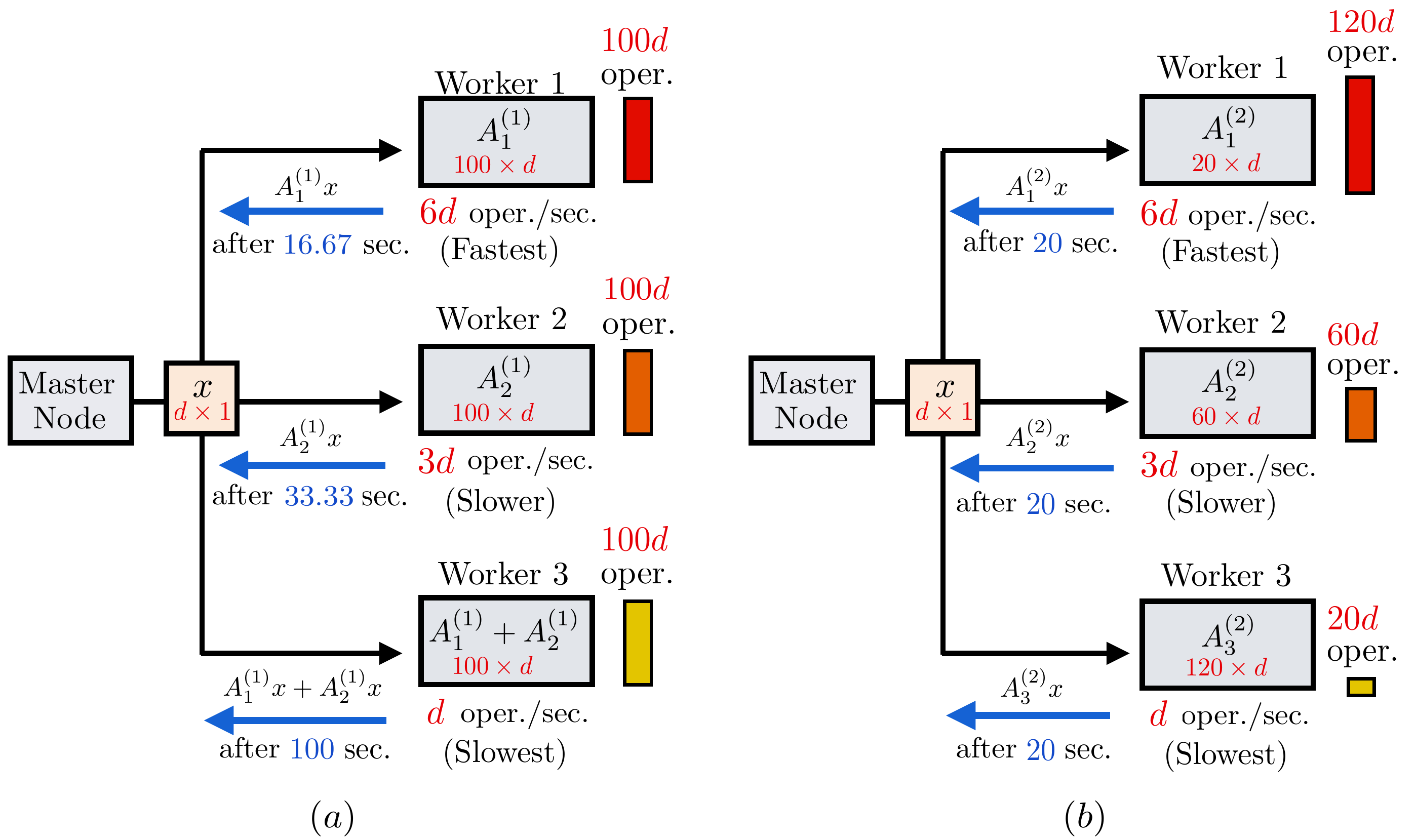}
  \vspace{-5pt}
\caption{ {The matrix multiplication $Ax$ problem over three workers with different computational speeds. In Figure 1(a), a $(3,2)$ MDS code is used so that any $2$  replies are enough to decode $Ax$. This coded computation scheme achieves a computation time of $33.33$ seconds. In Figure 1(b), the computation of $Ax$ is unevenly divided across the workers according to the heterogeneity knowledge so that all workers finish simultaneously after $20$ seconds.}}
\label{fig:example}
\vspace{-25pt}
  \end{center}
\end{figure}

\noindent We next summarize the main contributions of this paper:

\noindent \hspace{4pt}$\bullet$ We first study a baseline scheme which uses $(n,k)$ MDS codes for distributed computation, and  find the expected mean computation time optimized over all choices of $k \in [1:n]$.

\noindent \hspace{4pt}$\bullet$ We then study the work conservation principle, and derive an oracle lower bound on the computation time as a function of the computational heterogeneity of the system. We find that this oracle lower bound bears an interesting analogy to the well known \textit{water-filling} solution for optimal power allocation over parallel channels with heterogenous noise characteristics. In the optimal water-filling solution, more power is assigned to channels with higher signal-to-noise ratio (SNR). Similarly, the oracle lower bound naturally suggests that to minimize computation time, more computations must be assigned to workers with faster computational capabilities.

\noindent \hspace{4pt}$\bullet$ We then propose a work-exchange  based scheme, with the goal of approaching the oracle bound, through limited coordination and communication among workers. 
We first consider the case when the heterogeneity knowledge is known apriori. We then address a more realistic scenario where the heterogeneity knowledge is not available, and can be gradually learned  over time allowing for larger levels of coordination and communication as compared to the former case.

\noindent \hspace{4pt}$\bullet$ We present simulation results to compare the baseline optimized MDS coded scheme, the oracle lower bound and the variations of  work-exchange based scheme. Interestingly, the work-exchange scheme is shown to be very close to the lower bound and requires limited amount of coordination and communication even without heterogeneity knowledge.

\vspace{-7pt}
\section{System Model}
\label{sec:System}
\vspace{-7pt}

We consider a distributed computation system, where the master node has the entire data-set of $N$ data points.
The master node sends batches of the data-set to a set of $K$ distributed workers in order to locally calculate some function or train a model in a distributed manner. 
Each worker independently starts to compute a function of the assigned batch (as an example, this function could correspond to the gradient or sub-gradients of the data points assigned to the worker).
Then, the local functions are fed-back to the master node for further processing.
For this work, we assume the computations can be divided into smaller and independent sub-tasks (i.e., $N$ is large enough) and the task is not sequential which means that the results do not depend on the order in which the data is being processed, and the overall computational task terminates once the master node gets results for the total $N$ data points in any order.
The time utilized to process a single data point  is random, and is modeled at any worker $w_k$ as an exponential random variable with average speed $\lambda_k$ number of processed points per second, i.e., $t^n_k\sim \text{exp}(\lambda_k)$ and $t^n_k$ is the time used to process a point $n=\{1,2,\ldots, N\}$ at worker $w_k$, where $k\in \bar{K}=\{1,\ldots,K\}$. We also define $\bar{\lambda}=\{\lambda_1,\ldots, \lambda_K\}$ as the heterogeneity parameter set for the $K$ workers.

The master node initially assigns the data-set to the $K$ workers in data batches spanning the whole data-set.
Our system model also allows for reassigning data points to the workers according to some conditions in order to speed up the process, where the new assignments span the points that are not yet processed.
We define $N_{\text{assign}}^{(k,i)}$ as the number of points assigned to worker $w_k$ at the reassigning iteration $i\in\{1,2,\ldots, I\}$, where $I$ represents the total number of reassigning epochs/iterations. 
 We also  define $N_{\text{done}}^{(k,i)}$ as the number of points processed by worker $w_k$ during the $i^{th}$ iteration, and $N_{\text{done}}^{(k)}=\sum_{i=1}^I N_{\text{done}}^{(k,i)}$ as the total number of points processed by worker $w_k$ over the entire process.
%
The reassignment process imposes extra communication overhead, which is given for a worker $w_k$ and the $i^{th}$ assignment by 
\begin{align}
N_{\text{comm}}^{(k,i)}=\max(N_{\text{assign}}^{(k,i)}-N_{\text{left}}^{(k,i-1)},0),
\end{align}
where $N_{\text{left}}^{(k,i-1)}=N_{\text{assign}}^{(k,i-1)}-N_{\text{done}}^{(k,i-1)}$ is number of points not processed by $w_k$ from the previous assignment. The total extra communication overhead (not including the initial assignment) is therefore 
\begin{align}
\label{eq:comm-overhead}
N_{\text{comm}} = \sum _{k=1}^K \sum _{i=2}^I N_{\text{comm}}^{(k,i)}.
\end{align}
The end-to-end completion time $T$ can be written now as the sum of three components:
$T= T_{\text{comp}} +T_{\text{comm}} +T_{\text{corr}}$,
where $T_{\text{comp}}$ is the time for computation, $T_{\text{comm}}$ is the time for communication and is directly proportional to $N_{\text{comm}}$, and $T_{\text{corr}}$ is the time for coordination and is directly proportional to the number of reassignments $I$. In this work, we will study the terms \{$T_{\text{comp}}$, $N_{\text{comm}}$, $I$\}, and the trade-offs between them.

%

 \section{Baseline MDS Coded Computation Scheme}

We first consider the MDS coded computation scheme recently developed  in \cite{CodedComp-Shuffle2017} to combat stragglers as a baseline to be compared with the lower bound and the proposed work-exchange schemes. For fair comparison, we consider the same number of distributed workers $K$ to be used in order to process a data-set of $N$ data points, where MDS codes can be used as follows: a) the data-set is divided into equal chunks of $N/L$ points each, for some $ L\in \bar{K}=\{1,2,\ldots, K\}$; b) $(K,L)$ MDS codes can be used to assign coded chunks of data to the workers, such that  the learning process is done as soon as any $L$ out of the $K$ workers reply back with the processed points; and then c) the value of $L\in \bar{K}$ can be optimized in order to get the lowest possible computation time. 

Let us now define the random variable $T_{k,m}$ as the time required to process $m$ points by a worker $w_k$ with rate $\lambda_k$, then $T_{k,m}$ (which is a sum of exponential random variables) follows an Erlang distribution, i.e., $T_{k,m}\sim \text{Erlang}(m,\lambda_k)$.
Using $(K,L)$ MDS codes, the total computation time  is limited by the $L^{\text{th}}$ fastest worker instead of the slowest worker, i.e., $T_{\text{comp}}^{\text{MDS}}(L)= T^{(L,m)}$ at $m=N/L$, where $T^{(L,m)}$ is the $L^{\text{th}}$ ordered statistic for the random variables $T_{k,m}$, where $k \in \bar{K}$.
Therefore, the mean value for  $T_{\text{comp}}^{\text{MDS}}(L)$ is 
\begin{align}
E[T_{\text{comp}}^{\text{MDS}}(L)]= \mu_{(L,N/L)},
\end{align}
where $\mu_{(L,m)}$ is defined as the mean of the $L^{\text{th}}$ ordered statistic for $K$ independent and non-identically distributed Erlang random variables $T_{k,m},\,k\in\bar{K}$, with rate $\lambda_k$ and a shape parameter $m$. In order to find $\mu_{(L,m)}$, we use the results in \cite{mean-os-erlang2004} and apply them to our model after some mathematical manipulations (omitted here due to space constraints), where using the following recursion relation we can find $\mu_{(L,m)}$ starting from $\mu_{(0,m)}=0$ as follows:
\begin{align}
\label{eq:mean_m_K}
\hspace{-5pt}\mu_{(\ell,m)} =\mu_{(\ell-1,m)} +\sum_{j=1}^{\ell} (-1)^{j-1}\hspace{-3pt}
\left(\hspace{-5pt}\begin{array}{c}
K-\ell+j\\ j-1
\end{array}\hspace{-5pt}\right)
P_{K-\ell+j}^m,
\end{align}
where $\ell\in\{1,2,\ldots,L\}$, and $P_{j}^m$ can be written as
\begin{align}
\label{eq:P_baseline}
 P_{j}^m= \sum_{\mathcal{K}_j\subseteq \bar{K}} \sum_{\substack{0\leq n_i<m\\i=1,\ldots,j}}\frac{1}{ \lambda_{\mathcal{K}_j}}
\left(\frac{(\sum_{i} n_i)\,!}{n_{1}!\ldots\,n_{j}!}\right)
\prod_{i=1}^j \left(\frac{\lambda_{ki}}{\lambda_{\mathcal{K}_j}}\right)^{n_{i}},
\end{align}
where $\mathcal{K}_j=\{k_1,\ldots,k_j\}\subseteq \bar{K}$ is an arbitrary sub-set of $j$ workers, and $\lambda_{\mathcal{K}_j}=\sum_{i=1}^j \lambda_{k_i}$.
Finding $\mu_{L,m}$ using the recursion relation in (\ref{eq:mean_m_K}), $\forall L\in\bar{K}$, we then can get the optimized mean of the MDS coded computation scheme as follows
\begin{align}
\label{eq:mean_MDS}
E[T_{\text{comp}}^{\text{MDS}}] = \min_{L\in \bar{K}} \mu_{(L,N/L)}.
\end{align}

\section{Lower Bound (Work Conservation Principle)}
In this section, we present an oracle lower bound on the computation time using the work conservation principle. 
For the oracle bound analysis, we assume a genie aided system satisfying the work conservation principle with the following assumptions:
\begin{enumerate}
\item  All the $K$ workers have the entire data-set available in their storage;
\item Complete coordination between the workers is permitted and the data points computed at each worker have no overlap (i.e., each worker performs completely distinct computations), i.e., $\sum_{i=1}^K N_{\text{done}}^{(k)}=N$; 
\item No worker remains idle at any point of the learning process.
\end{enumerate}

We now define $T_{\text{comp}}^{\text{oracle}}$ as a random variable representing the total time needed to finish the computations for all the $N$ data points across the $K$ workers. We further define the random variable $N_{T_{\text{comp}}^{\text{oracle}}}(t)$ as the total number of points processed across the $K$ workers at time $t$. The following Theorem provides an oracle bound for the average computation time, as well as the expected number of data points  processed by every worker towards completion following the work conservation principle.
\begin{theorem}
\label{thm1}
The average computation time for the oracle scheme on a set of $K$ distributed workers of heterogeneity set $\bar{\lambda}$, and collaborating on the processing of $N$ number of data points is given by
\begin{align}
\label{eq:thm1}
E[T_{\text{comp}}^{\text{oracle}}] =\frac{N}{\lambda_{\text{sum}}}, \text{~~~where $\lambda_{\text{sum}}=\sum_{i=1}^K \lambda_k$.}
\end{align}
\end{theorem}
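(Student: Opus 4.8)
The plan is to exploit the memorylessness of the exponential service times to show that the aggregate completion process across all $K$ workers is a single Poisson process of rate $\lambda_{\text{sum}}$, so that $T_{\text{comp}}^{\text{oracle}}$ is simply the time of its $N$-th arrival, whose mean is then read off immediately.

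First I would argue that, under the work conservation principle (in particular assumption 3, that no worker is ever idle), each worker $w_k$ processes a back-to-back sequence of data points whose individual processing times $t^n_k\sim\text{exp}(\lambda_k)$ are i.i.d. The sequence of completion instants at worker $w_k$ is therefore a renewal process with exponential inter-event times, i.e., a homogeneous Poisson process of rate $\lambda_k$. The crucial point is that reassignments do not break this structure: whenever a worker is handed a new (distinct, not-yet-processed) point, the memorylessness of the exponential guarantees that its residual time to the next completion is again $\text{exp}(\lambda_k)$, independent of the past. Thus the per-worker completion counts remain genuine rate-$\lambda_k$ Poisson processes irrespective of the reassignment schedule.

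Next I would invoke assumption 2 (distinct, non-overlapping computations) to write the total processed count $N_{T_{\text{comp}}^{\text{oracle}}}(t)$ as the sum of the $K$ independent per-worker completion processes. By the superposition theorem for independent Poisson processes, this aggregate counting process is itself Poisson with rate $\lambda_{\text{sum}}=\sum_{k=1}^K \lambda_k$. Since the overall task terminates exactly when the aggregate count first reaches $N$, and a Poisson process increases by unit jumps (no two completions coincide almost surely), $T_{\text{comp}}^{\text{oracle}}$ is precisely the $N$-th arrival time of this rate-$\lambda_{\text{sum}}$ process, hence $T_{\text{comp}}^{\text{oracle}}\sim\text{Erlang}(N,\lambda_{\text{sum}})$, giving $E[T_{\text{comp}}^{\text{oracle}}]=N/\lambda_{\text{sum}}$ as in \eqref{eq:thm1}.

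The main obstacle, and the step I would treat most carefully, is the claim that reassigning partially-processed points leaves each per-worker process Poisson; this is exactly where the exponential modeling assumption is indispensable, so I would state the memorylessness argument explicitly rather than taking the Poisson structure for granted. As a cross-check that avoids the full distributional claim, one can obtain the mean directly via Wald's identity: the expected number of points completed by worker $w_k$ in a running time $T$ equals $\lambda_k\,E[T]$, and taking expectations in the work-conservation constraint $\sum_{k=1}^K N_{\text{done}}^{(k)}=N$ yields $\lambda_{\text{sum}}\,E[T_{\text{comp}}^{\text{oracle}}]=N$, recovering the same answer.
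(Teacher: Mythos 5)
Your proof is correct, but it takes a genuinely different route from the paper's. The paper works analytically with the completion-time distribution: it writes the CDF $F_{T_{\text{comp}}^{\text{oracle}}}(t)=\Pr(N_{T_{\text{comp}}^{\text{oracle}}}(t)\geq N)$ using the product of independent Poisson probability masses $\text{P}_{\bar{n}}(t)$, integrates the survival function term by term (each integral evaluating to $\frac{1}{\lambda_{\text{sum}}}\binom{n}{n_1,\ldots,n_K}\prod_k(\lambda_k/\lambda_{\text{sum}})^{n_k}$), and collapses the resulting sum via the multinomial theorem to get $N/\lambda_{\text{sum}}$. You instead invoke the superposition theorem for independent Poisson processes to identify the aggregate completion process as Poisson of rate $\lambda_{\text{sum}}$, so that $T_{\text{comp}}^{\text{oracle}}\sim\text{Erlang}(N,\lambda_{\text{sum}})$ and the mean is immediate. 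Both arguments rest on the same probabilistic foundation --- that each worker's completion counts form an independent rate-$\lambda_k$ Poisson process, which you justify carefully via memorylessness under reassignment, a point the paper assumes implicitly in writing \eqref{eq:P_n} --- but your route is shorter, avoids the combinatorial manipulation, and yields strictly more: the full Erlang distribution of $T_{\text{comp}}^{\text{oracle}}$, not just its mean (the paper's CDF expression contains this implicitly but never states it). Your Wald-identity cross-check is also a nice bonus: it not only recovers the mean without the distributional claim, but gives a rigorous one-line derivation of the paper's Corollary~\ref{coro1}, $E[N_{\text{done}}^{(k)}]=\lambda_k E[T_{\text{comp}}^{\text{oracle}}]=\lambda_k N/\lambda_{\text{sum}}$, which the paper argues somewhat heuristically by fixing $t$ at the mean completion time; the only point worth making explicit there is that $T_{\text{comp}}^{\text{oracle}}$ is a stopping time for the joint filtration, so optional stopping applied to the compensated martingale $N_{\text{done}}^{(k)}(t)-\lambda_k t$ is legitimate.
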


This lower bound can be viewed as the expected computation time for an oracle machine working with the sum of the speeds ($\lambda_{\text{sum}}$) of the distributed workers.

\begin{proof}The CDF of $T_{\text{comp}}^{\text{oracle}}$ for $t\geq 0$ can be found as
\begin{align}
F_{T_{\text{comp}}^{\text{oracle}}}(t) &= \Pr(T_{\text{comp}}^{\text{oracle}}\leq t ) =\Pr(N_{T_{\text{comp}}^{\text{oracle}}}(t)\geq N)\nonumber\\
&=\sum_{\substack{n_k\in \bar{n}\\n_{\text{sum}}=N}} \prod_{k=1}^K \Pr \left(N_{\text{done}}^{(k)}(t)\geq n_k\right)\nonumber\\
&=\sum_{\substack{n_k\in \bar{n}\\n_{\text{sum}}\geq N}} \text{P}_{\bar{n}}(t)\nonumber\\
&=1-\sum_{\substack{n_k\in \bar{n}\\n_{\text{sum}}<N}} \text{P}_{\bar{n}}(t),
\end{align}
 where $N_{T_{\text{comp}}^{\text{oracle}}}(t)$ is the total number of points processed at time $t$, $n_{\text{sum}}=\sum_{k=1}^K n_k$, $\bar{n}=\{n_1,\ldots,n_K\}$,  and $\text{P}_{\bar{n}}(t)$ is given as follows
\begin{align}
\label{eq:P_n}
\text{P}_{\bar{n}}(t)&=\prod_{k=1}^K \Pr \left(N_{\text{done}}^{(k)}(t)= n_k\right)= \prod_{k=1}^K \frac{e^{-\lambda_k t} (\lambda_k t)^{n_k}}{n_k!}.
\end{align}

Since $T_{\text{comp}}^{\text{oracle}}$ is a non-negative random variable, the average time $E[T_{\text{comp}}^{\text{oracle}}]$ to finish all the $N$ point across the $K$ workers can be found as
\begin{align}
\label{eq:mean_oracle1}
\hspace{-2pt}E[T_{\text{comp}}^{\text{oracle}}] \hspace{-2pt}=\hspace{-2pt} \int_{0}^{\infty}\hspace{-3pt}\left(1-F_{T_{\text{comp}}^{\text{oracle}}}(t)\right) dt\hspace{-2pt}=\hspace{-2pt}\sum_{\substack{n_k\in \bar{n}\\n_{\text{sum}}<N}} \hspace{-2pt}\int_{0}^{\infty} \text{P}_{\bar{n}}(t) dt,
\end{align}
where the integration in (\ref{eq:mean_oracle1}) can be found as
\begin{align}
\label{eq:integration_oracle}
\hspace{-3pt}\int_{0}^{\infty}\hspace{-2pt} \text{P}_{\bar{n}}(t)dt= 
\frac{1}{\lambda_{\text{sum}}}
\left(\frac{n_{\bar{K}}\,!}{n_{1}!\ldots\,n_{K}!}\right)
\prod_{k=1}^K \left(\frac{\lambda_k}{\lambda_{\text{sum}}}\right)^{n_k}.
\end{align}
\noindent Using (\ref{eq:integration_oracle}), we can now rewrite the summation in (\ref{eq:mean_oracle1}) as
\begin{align}
\label{eq:mean_oracle2}
E[T_{\text{comp}}^{\text{oracle}}] &= \frac{1}{\lambda_{\text{sum}}} \sum_{n=0}^{N-1} \sum_{\substack{n_k\in \bar{n}\\n_{\text{sum}}=n}}\left(\begin{array}{c}
n \\ n_1,\ldots,n_K
\end{array}\right)
\prod_{k=1}^K \left(\frac{\lambda_k}{\lambda_{\text{sum}}}\right)^{n_k}\nonumber\\
&=\frac{1}{\lambda_{\text{sum}}} \sum_{n=0}^{N-1} 1= \frac{N}{\lambda_{\text{sum}}},
\end{align}
where the second equality follows from the Multinomial theorem, which proves Theorem~\ref{thm1}.
\end{proof}

\begin{corollary}
\label{coro1}
The expected total number of points processed at completion by worker $w_k$ with mean speed $\lambda_k$ is given by
\begin{align}
\label{eq:coro1}
E[N_{\text{done}}^{(k)}]= \lambda_k t = \frac{\lambda_k N}{\lambda_{\text{sum}}}.
\end{align}
\end{corollary}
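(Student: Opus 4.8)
The plan is to exploit the fact that, at the (random) completion time $T_{\text{comp}}^{\text{oracle}}$, the total count across all workers equals exactly $N$, and to identify how these $N$ completed points are partitioned among the workers. First I would recall that each worker's processing forms a Poisson counting process $N_{\text{done}}^{(k)}(t)$ of rate $\lambda_k$, and that these $K$ processes are mutually independent. Their superposition $\sum_{k=1}^{K} N_{\text{done}}^{(k)}(t)$ is then itself a Poisson process of rate $\lambda_{\text{sum}}$, and under the work-conservation assumptions the completion time $T_{\text{comp}}^{\text{oracle}}$ is precisely the epoch of the $N$-th event of this superposed process.

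Next I would invoke the standard coloring (thinning) property of Poisson processes: conditioned on the superposed process having produced $N$ events, each event independently originates from worker $w_k$ with probability $\lambda_k/\lambda_{\text{sum}}$. Hence the joint law of $(N_{\text{done}}^{(1)},\ldots,N_{\text{done}}^{(K)})$ at completion is $\text{Multinomial}(N;\lambda_1/\lambda_{\text{sum}},\ldots,\lambda_K/\lambda_{\text{sum}})$, whose $k$-th marginal mean is $N\lambda_k/\lambda_{\text{sum}}$. This yields $E[N_{\text{done}}^{(k)}]=\lambda_k N/\lambda_{\text{sum}}$ directly, and it matches the stated form $\lambda_k t$ once we substitute $t=E[T_{\text{comp}}^{\text{oracle}}]=N/\lambda_{\text{sum}}$ from Theorem~\ref{thm1}.

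The main subtlety, and the step I would treat most carefully, is that $T_{\text{comp}}^{\text{oracle}}$ is a \emph{random} stopping time, so one cannot simply plug the mean time into $E[N_{\text{done}}^{(k)}(t)]=\lambda_k t$ and assert $E[N_{\text{done}}^{(k)}]=\lambda_k E[T]$ without justification. The coloring argument above sidesteps this concern cleanly. Alternatively, I would establish the identity $E[N_{\text{done}}^{(k)}(T)]=\lambda_k E[T]$ rigorously via Wald's identity, equivalently by applying optional stopping to the martingale $N_{\text{done}}^{(k)}(t)-\lambda_k t$, observing that $T_{\text{comp}}^{\text{oracle}}$ is a stopping time with finite expectation $E[T]=N/\lambda_{\text{sum}}$. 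Either route makes the heuristic reading ``expected rate times expected time'' precise and recovers the claimed value.

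As a consistency check that I would flag, summing the per-worker expectations gives $\sum_{k=1}^{K} E[N_{\text{done}}^{(k)}]=\lambda_{\text{sum}}\cdot N/\lambda_{\text{sum}}=N$, which is exactly the work-conservation constraint $\sum_{k} N_{\text{done}}^{(k)}=N$ imposed in the oracle model, confirming that the allocation is self-consistent.
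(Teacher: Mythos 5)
Your proof is correct, and it takes a genuinely different --- and more rigorous --- route than the paper. The paper's own proof of Corollary~\ref{coro1} is a two-line heuristic: it fixes the deterministic time frame $t=N/\lambda_{\text{sum}}$ (the mean completion time from Theorem~\ref{thm1}), observes that the number of points processed by $w_k$ in a \emph{fixed} time $t$ is $\text{Poisson}(\lambda_k t)$, and reads off the mean $\lambda_k N/\lambda_{\text{sum}}$. This silently substitutes the expectation of a random stopping time into the fixed-time formula --- precisely the gap you flag. Your argument instead treats $T_{\text{comp}}^{\text{oracle}}$ as the $N$-th epoch of the superposed rate-$\lambda_{\text{sum}}$ Poisson process and uses the coloring/thinning property to conclude that the completion counts $\left(N_{\text{done}}^{(1)},\ldots,N_{\text{done}}^{(K)}\right)$ are jointly $\text{Multinomial}\left(N;\lambda_1/\lambda_{\text{sum}},\ldots,\lambda_K/\lambda_{\text{sum}}\right)$. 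This is strictly stronger than what the paper establishes: it gives the exact joint law at the random completion time, where the per-worker count is bounded by $N$ and hence \emph{not} Poisson (the Poisson law the paper cites only applies at a deterministic $t$), while still recovering the stated expectation. Your alternative route via Wald's identity, i.e., optional stopping applied to the compensated martingale $N_{\text{done}}^{(k)}(t)-\lambda_k t$ with $E[T_{\text{comp}}^{\text{oracle}}]=N/\lambda_{\text{sum}}<\infty$, supplies the same rigor by a second path and makes the reading ``expected rate times expected time'' precise. What each approach buys: the paper's version is short and consistent with Theorem~\ref{thm1} as a back-of-envelope identity; yours closes the stopping-time gap and additionally delivers the full distribution of the work split (with the built-in consistency check $\sum_k E[N_{\text{done}}^{(k)}]=N$), at the modest cost of invoking standard Poisson-process machinery.
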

\begin{proof}
Let us fix a time frame  $t=\frac{N}{\lambda_{\text{sum}}}$, which is the mean completion time. The number of points done by a worker $w_k$ in time $t$ is a Poisson random variable with mean $\lambda_k t$, i.e., $N_{\text{done}}^{(k)}\sim \text{Poisson}(\lambda_k t= \frac{\lambda_k N}{\lambda_{\text{sum}}})$. Therefore, the mean number of points processed by $w_k$ by completion is given as in (\ref{eq:coro1}).
\end{proof}

\section{Heterogeneity Aware Work-Exchange}

In this section, we try to approach the oracle bound in (\ref{eq:thm1}) using the prior heterogeneity knowledge of the workers reflected by the values of $\lambda_k$, for $k=1,\ldots, K$. 
According to  (\ref{eq:coro1}), the average number of data points processed by a worker $w_k$ is directly proportional to $\lambda_k$.
That is, we need to assign smaller tasks for slower workers, and larger tasks for faster workers. 
In our first algorithm, we we present a heterogeneity aware scheme in which do not allow any intermediate coordination or communication between the workers and the master once the data is initially assigned. That is, the tasks assignment for the workers remains fixed throughout the computation process.  We next introduce the work exchange approaches by allowing for a low level of intermediate coordination and communication between the workers and the master in order to further reduce the latency.

\subsection{Heterogeneity Aware Scheme with Fixed Work Assignment}\label{SecAlgo1}

The basic idea of this scheme is that the master node assigns different and non-overlapping portions of data for every worker depending on the heterogeneity knowledge such that the worker with higher $\lambda$ gets more data assignment as follows
\begin{align}
\label{eq:assign-NC}
N_{\text{assign}}^{(k)}= \frac{\lambda_k N}{\lambda_{\text{sum}}}, \quad \forall k\in\bar{K}.
\end{align}

\begin{figure}[t]
  \begin{center}
  \includegraphics[width=0.6\columnwidth]{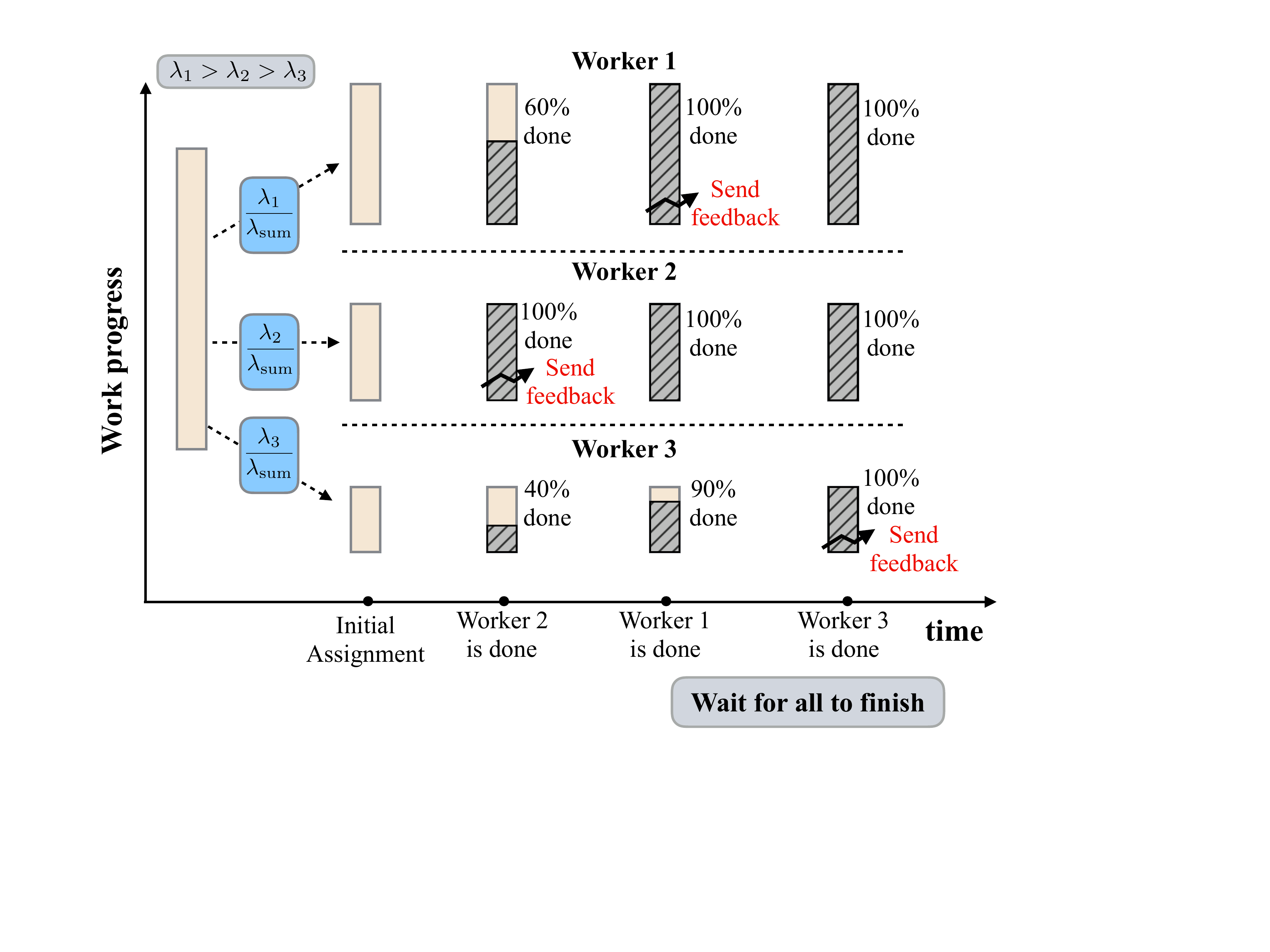}
\caption{{(Fixed work assignment Algorithm in Section  \ref{SecAlgo1}) The work is initially assigned according to the heterogeneity knowledge such that faster workers get larger assignments and vice versa. The master node waits for all the workers to reply.}}
\label{fig:sch1}
\vspace{-20pt}
  \end{center}
\end{figure}

The master node waits for all the workers to reply back as shown in Figure~\ref{fig:sch1}. 
That is the computation time is limited by the last worker to reply.
The completion time for worker $w_k$ is the time needed to process $N_{\text{assign}}^{(k)}$ data-points at a rate $\lambda_k$, which follows an Erlang distribution, i.e.,
$T_{k}\sim \text{Erlang}(N_{\text{assign}}^{(k)},\lambda_k)$.
Therefore, the total computation time is
\begin{align}
\label{eq:heter-aware}
T_{\text{comp}}=\max(T_{1},T_{2},\ldots,T_{K}).
\end{align}

\subsection{Work Exchange with Heterogeneity Knowledge}\label{SecAlgo2}

We next present the work exchange scheme for the scenario in which the master node has knowledge about computational heterogeneity. 
Similar to the previous scheme, the initial assignment depends on the heterogeneity knowledge as follows
\begin{align}
\label{eq:assign-WC}
N_{\text{assign}}^{(k,1)}= \frac{\lambda_k N}{\lambda_{\text{sum}}}, \quad \forall k\in\bar{K}.
\end{align}
The completion time for worker $w_k$ in the first iteration, $T_{k}^{(1)}$, is the time needed to process $N_{\text{assign}}^{(k,1)}$ data-points at a rate $\lambda_k$, which follows an Erlang distribution, i.e.,
$T_{k}^{(1)}\sim \text{Erlang}(N_{\text{assign}}^{(k,1)},\lambda_k)$.
Once any worker $w_k$ finishes the processing over its data chunk it feeds-back the master-node, and declares that it is now idle by sending a completion flag $f^{(k)}=1$ as shown in Figure~\ref{fig:sch2}. 
The time for the first iteration then is given by time when the first worker replies, i.e., 
\begin{align}
T_{\text{comp}}^{(1)}=\min(T_{1}^{(1)},T_{2}^{(1)},\ldots,T_{K}^{(1)}).
\end{align}

\begin{figure}[t]
  \begin{center}
  \includegraphics[width=0.75\columnwidth]{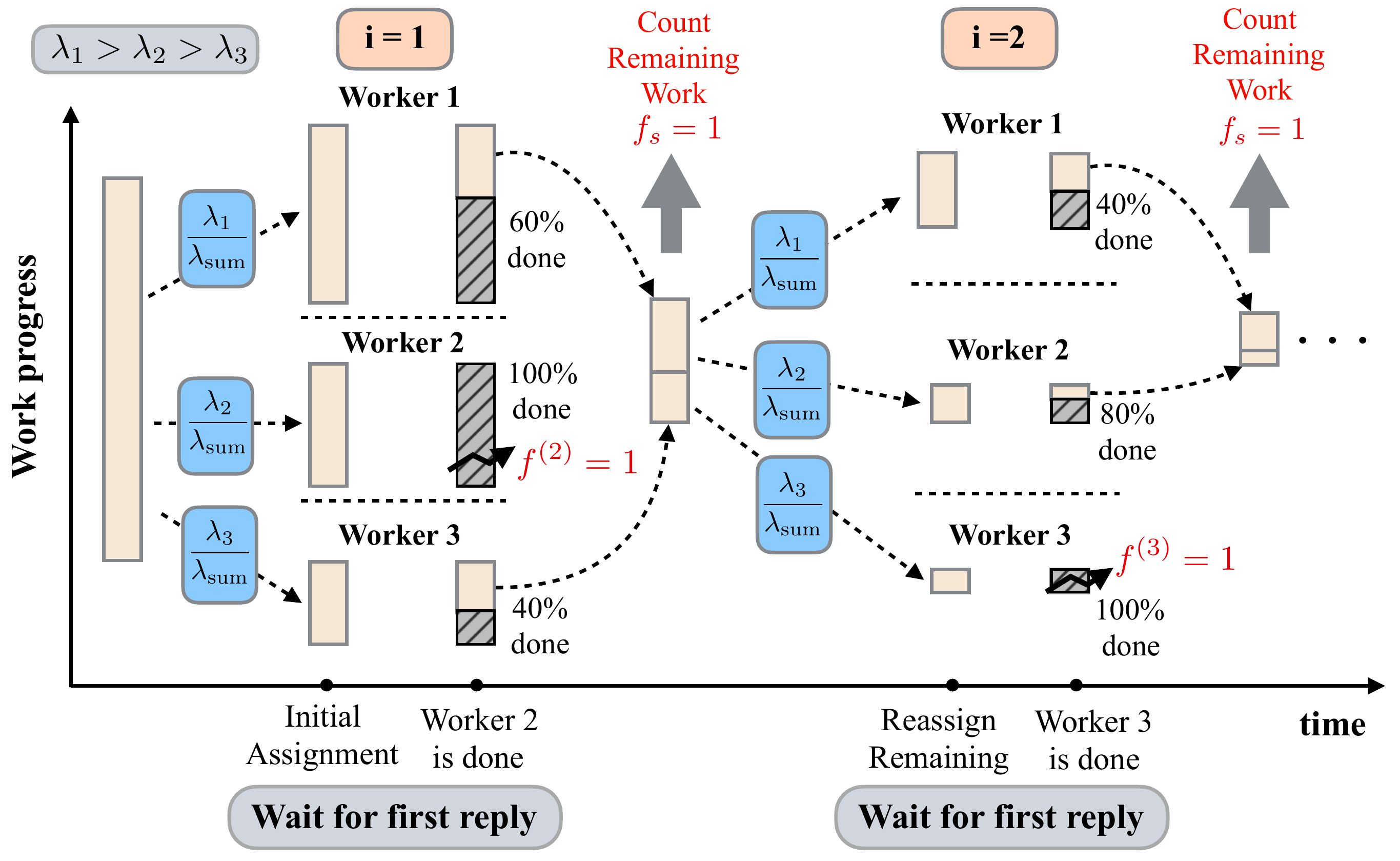}
\caption{{(Heterogeneity Aware Work Exchange Algorithm in Section  \ref{SecAlgo2}) The work is initially assigned according to the heterogeneity knowledge. The master node waits for the first reply, and then performs work exchange reassignments. The process continues over the remaining work till all the computations are collected.}}
\label{fig:sch2}
\vspace{-20pt}
  \end{center}
\end{figure}

The master-node then pauses the processing over all workers by broadcasting a stop flag $f^s=1$, and gets \{$y^{(k,1)}$, $N_{\text{left}}^{(k,1)}$\}, $\forall k\in\bar{K}$, where for each worker $w_k$: $y^{(k,1)}$ is the latest work done before receiving $f^s=1$ after the initial reassigning iterations $i=1$, and $N_{\text{left}}^{(k,1)}$ is the number of points not processed yet.
The total number of points left after the initial reassigning iterations $i=1$ is then $N_{\text{rem}}^{(1)}=\sum_{k=1}^K N_{\text{left}}^{(k,1)}$. 
 The reassigning process in the next iteration is done as follows: the master node reassigns new non-overlapping portions of the remaining $N_{\text{rem}^{(1)}}$ points depending on the values of $\lambda$ for each worker using the same principle as before according to (\ref{eq:assign-WC}).
 
 Generally, the new assignment at  iteration $i$ is:
\begin{align}
\label{eq:assign_rem}
N_{\text{assign}}^{(k,i)}= \frac{\lambda_k N_{\text{rem}}^{(i-1)}}{\lambda_{\text{sum}}}, \quad \forall k\in\bar{K}.
\end{align}
The time for iteration $i$ is given by
$T_{\text{comp}}^{(i)}=\min(T_{1}^{(i)},T_{2}^{(i)},\ldots,T_{K}^{(i)})$,
where $T_{k}^{(i)}\sim \text{Erlang}(N_{\text{assign}}^{(k,i)},\lambda_k)$.
 The procedure can then be repeated for $I$ iterations until there are no remaining points left, $N_{\text{rem}}^{(I)} =0$.
 The total computation time is given by the sum of the computation times over all iterations
$T_{\text{comp}} = \sum_{i=1}^I T_{\text{comp}}^{(i)}$.
  The reassigning process imposes extra coordination overhead to get feedback from the workers at each reassigning epoch, and it also involves extra communication overhead since the master node needs to retransmit the new assigned points to the workers given by (\ref{eq:comm-overhead}).
  However, since every worker is initially assigned the expected number of points it can process, then for large number of points we expect negligible communication, and low coordination overheads. The complete steps of the work exchange scheme with heterogeneity knowledge are shown in Algorithm~\ref{alg1} for the master's node protocol and Algorithm~\ref{alg2} for the worker's protocol.

\begin{algorithm}
    \caption{Work exchange scheme with heterogeneity knowledge: Master node's protocol\label{alg1}}
  \begin{algorithmic}
    \INIT $N_{\text{rem}}= N$, $N_{\text{left}}^{(k)}=0$, $\forall k\in\bar{K}$
    \WHILE{$N_{\text{rem}} > N_{\text{rem}}^{\text{th}}$}
        \STATE Assign $N_{\text{assign}}^{(k)}=\frac{\lambda_k N_{\text{rem}}}{\lambda_{\text{sum}}}$, $\forall k\in\bar{K}$ \COMMENT{points assigned to each worker}
        \STATE Send $\max(N_{\text{assign}}^{(k)}-N_{\text{left}}^{(k)},0)$ new data points to each worker $w_k$
        \STATE Broadcast $f^s=0$
   	 	\STATE \textbf{on} receiving $f^{(k)}=1$, for any $k\in\bar{K}$
   	 	\STATE \quad Broadcast $f^s=1$
   	 	\STATE \quad Get \{$y^{(k)}$, $N_{\text{left}}^{(k)}$\}, $\forall k\in\bar{K}$
   	 	\COMMENT{get feedback from each worker}
   		 \STATE \quad Find $N_{\text{rem}}=\sum_{k\in\bar{K}} N_{\text{left}}^{(k)}$
    \ENDWHILE
    \STATE
  \end{algorithmic}
\end{algorithm}
\begin{algorithm}
    \caption{Worker node $w_k$'s protocol\label{alg2}}
  \begin{algorithmic}
    \STATE \textbf{On} receiving $N_{\text{assign}}^{(k)}$ points
 	\STATE Start processing
    \WHILE{$f^s==0$ \AND$\:f^{(k)}==0$}
    \STATE \quad \textbf{on} completing processing \textbf{set} $f^{(k)}=1$
    \ENDWHILE
    \STATE Feedback \{$y^{(k)}$, $N_{\text{left}}^{(k)}$, $f^{(k)}$\}
  \end{algorithmic}
\end{algorithm}

\begin{remark}[Cutting Threshold]
\label{rem1}
We observe that as the value of $N_{\text{rem}}$ decreases over iterations, smaller number of points are being assigned to workers. This may causing extra number of reassignment iterations. Instead, we add a cutting threshold $N_{\text{rem}}^{(I)} \leq N^{\text{th}}_{\text{rem}}$, where no more reassignment iterations are allowed, tolerating negligible delays by waiting for the slowest workers but over small number of remaining points $N^{\text{th}}_{\text{rem}}$. We note that the cutting threshold $N^{\text{th}}_{\text{rem}}$ is a design parameter and its choice impacts the overall completion time. The impact of the choice of this threshold is discussed through simulation results in Section \ref{Sec:Simulation}.
 \end{remark}

\section{Heterogeneity Unaware Work-exchange}\label{SecAlgo3}
We next introduce  a variation of the work exchange scheme, which assumes a more realistic scenario as follows:
a) the heterogeneity set, i.e., $\bar{\lambda}$, is  unknown apriori; and
b) each worker is not allowed any extra storage above $\frac{N}{K}$ points.
The steps for the scheme are depicted in Figure~\ref{fig:sch3}. As the master node initially does not know which worker is faster, it initially assigns equal data batches for the workers, i.e., $N_{\text{assign}}^{(k,1)}=\frac{N}{K}$ for $k\in\bar{K}$.
We can derive an estimate of the expected extra communication overhead for this scheme as the flow needed to restore the expected number of points to be processed by each worker in (\ref{eq:coro1}) starting from the uniform initial assignment, which is given by the following
\begin{align}
\label{eq:expected-comm-sch2}
E[N_{\text{comm}}] \approx \sum_{i=1}^K \max\left(\frac{N}{K}-\frac{N\lambda_k}{\lambda_{\text{sum}}},0\right).
\end{align}
\vspace{-10pt}

\begin{figure}[t]
  \begin{center}
  \includegraphics[width=0.75\columnwidth]{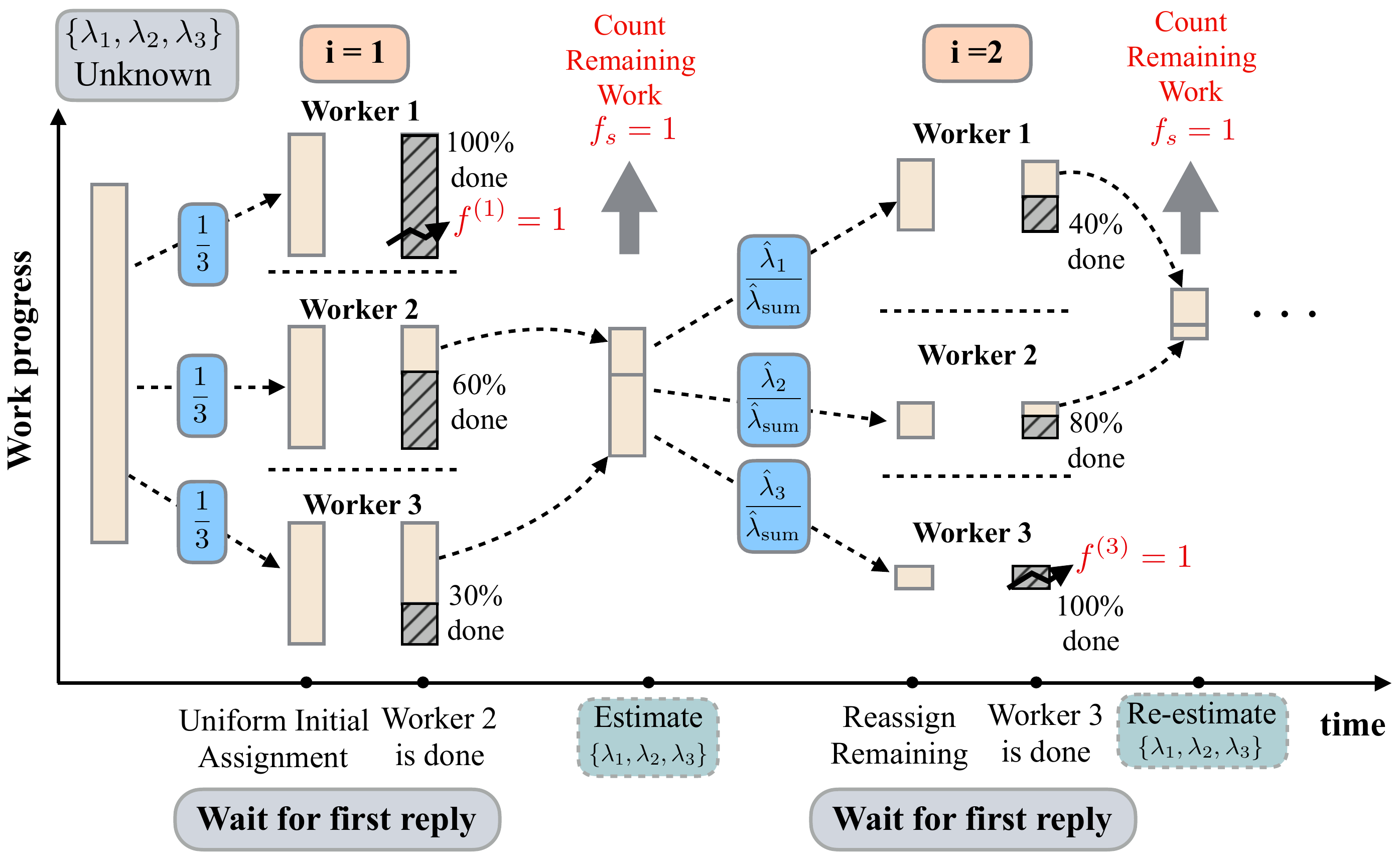}
\caption{{(Heterogeneity Unaware Work Exchange Algorithm in Section  \ref{SecAlgo3}) The work is uniformly assigned at the beginning. The master node waits for the first reply, estimates the average speed of each worker, and then reassigns the remaining work to all the workers according to the estimated heterogeneity. The process then continues till all the computations are completed.}}
\label{fig:sch3}
\vspace{-20pt}
  \end{center}
\end{figure}

Similar to the known heterogeneity case, upon receiving a completion flag $f^{(k)}=1$ from any of the workers, the master node stops the processing over all the workers by sending $f^s=1$, and gets the total number of points processed by each worker $w_k$ as $N_{\text{done}}^{(k,1)}$, and hence the total remaining points $N_{\text{rem}}^{(1)} =N - \sum_{k=1}^K N_{\text{done}}^{(k,1)}$.
The computation time for the first iteration is given by
\begin{align}
T_{\text{comp}}^{(1)}=\min(T_{1}^{(1)},T_{2}^{(1)},\ldots,T_{K}^{(1)}),
\end{align}
where $T_{k}^{(1)}\sim \text{Erlang}(N/K,\lambda_k)$. Before the next reassignment, the master node estimates the heterogeneity level as 
\begin{align}
  \hat{\lambda}_k^{(1)}=\frac{ N_{\text{done}}^{(k,1)}}{T_{\text{comp}}^{(1)}},\quad\forall k\in\bar{K}.
\end{align}
Therefore, the new reassignments can be given as
 \begin{align}
 \label{eq:reassign_sch2}
 N_{\text{assign}}^{(k,2)}= \min\left(\frac{N}{K},\frac{\hat{\lambda}_k^{(1)} N_{\text{rem}}^{(1)}}{\hat{\lambda}^{(1)}_{\text{sum}}}\right),
 \end{align}
 where $\hat{\lambda}^{(1)}_{\text{sum}} =\sum_{k=1}^K \hat{\lambda}^{(1)}_k$, and we take the minimum over those two values in order not to exceed the amount of storage $\frac{N}{K}$ allocated to each worker. 
 Generally for the next iterations, the heterogeneity is estimated as
 \begin{align}
  \hat{\lambda}_k^{(i-1)}=\frac{ \sum_{j=1}^{i-1}N_{\text{done}}^{(k,j)}}{\sum_{j=1}^{i-1}T_{\text{comp}}^{(j)}},\quad\forall k\in\bar{K},
\end{align}
while the new reassignments can be given as
 \begin{align}
 \label{eq:reassign_sch2}
 N_{\text{assign}}^{(k,i)}= \min\left(\frac{N}{K},\frac{\hat{\lambda}_k^{(i-1)} N_{\text{rem}}^{(i-1)}}{\hat{\lambda}^{(i-1)}_{\text{sum}}}\right),
 \end{align}
 where $N_{\text{rem}}^{(i-1)}= N- \sum_{j=1}^{i-1}\sum_{k=1}^{K}N_{\text{done}}^{(k,j)}$.
 It is clear that according to (\ref{eq:reassign_sch2}), that the 
sum over all $ N_{\text{assign}}^{(k,i+1)}$ for $k\in\bar{K}$ may not add necessarily to $N_\text{rem}^{(i)}$, and we simply carry on the difference for the next iteration.
Algorithm~\ref{alg3} summarizes the master's node protocol for the work exchange scheme with unknown heterogeneity, and the worker node's protocol works exactly as in Algorithm~\ref{alg2}. 
Furthermore, we can also add a cutting threshold $N_{\text{rem}}^{(I)} \leq N^{\text{th}}_{\text{rem}}$ over the remaining number of data points according to Remark~\ref{rem1}, to stop the reassigning process and reduce the level of coordination.

\begin{algorithm}
    \caption{Work exchange scheme with unknown heterogeneity: Master node's protocol\label{alg3}}
  \begin{algorithmic}
    \INIT $N_{\text{rem}}= N$, $\hat{\lambda}^{(k)}=1$, $N_{\text{left}}^{(k)}=0$, $N_{\text{done}}^{(k)}=0$, $\forall k\in\bar{K}$
    \WHILE{$N_{\text{rem}} > N_{\text{rem}}^{\text{th}}$}
        \STATE Assign $N_{\text{assign}}^{(k)}=\min\left(\frac{N}{K},\frac{\hat{\lambda}_k N_{\text{rem}}}{\hat{\lambda}_{\text{sum}}}\right)$, $\forall k\in\bar{K}$ \COMMENT{points assigned to each worker}
        \STATE Find $N_{\text{rem}}=N_{\text{rem}}-\sum N_{\text{assign}}^{(k)}$\COMMENT{Points to be carried on}
        \STATE Send $\max(N_{\text{assign}}^{(k)}-N_{\text{left}}^{(k)},0)$ points to each worker $w_k$
        \STATE Broadcast $f^s=0$
   	 	\STATE \textbf{on} receiving $f^{(k)}=1$, for any $k\in\bar{K}$
   	 	          \STATE \quad Broadcast $f^s=1$
   	 	\STATE \quad Get \{$y^{(k)}$, $N_{\text{left}}^{(k)}$\}, $\forall k\in\bar{K}$
   	 	\COMMENT{get feedback from each worker}
   		 \STATE \quad Find $N_{\text{rem}}=N_{\text{rem}}+\sum_{k\in\bar{K}} N_{\text{left}}^{(k)}$
   		  \STATE \quad Find $N_{\text{done}}^{(k)}=N_{\text{done}}^{(k)} +\left(N_{\text{assign}}^{(k)}- N_{\text{left}}^{(k)}\right)$, $\forall k\in\bar{K}$
   		  \STATE \quad Estimate $\hat{\lambda}_k=N_{\text{done}}^{(k)}/T_{\text{current}}$, $\forall k\in\bar{K}$
    \ENDWHILE
    \STATE
  \end{algorithmic}
\end{algorithm}

\section{Simulation Results}\label{Sec:Simulation}
In this section, we present simulation results to compare time for computation $T_\text{comp}$, the extra communication overhead $N_{\text{comm}}$, and the level of coordination needed (number of reassignments $I$) for the two proposed schemes against the oracle lower bound and the optimized baseline MDS coded scheme. In order to model the heterogeneity, we pick $\lambda_k$ for each worker uniformly at random  with mean $\mu$ and variance $\sigma^2$, i.e., $\lambda_k\sim \text{Uniform}(\mu-\sqrt{3\sigma^2},\mu+\sqrt{3\sigma^2})$, $\forall k\in\bar{K}$, therefore, for $\lambda_k\geq 0$, we have $0\leq\sigma^2\leq \mu^2/3$. The variance $\sigma^2$ indicates the level of heterogeneity, i.e., higher $\sigma^{2}$ leads to more heterogeneous workers, and when $\sigma^2=0$ then all the workers are homogeneous with $\lambda_k=\mu$, $\forall k \in \bar{K}$. In the following analysis, we run our proposed schemes for $N=10^6$ data points over $K=50$ distributed workers. The value of the cutting threshold (see Remark~\ref{rem1}) is given as a fraction of $\frac{N}{K}$, and is given the value $0.01 \frac{N}{K}$ by default.

\begin{figure}[t]
  \begin{center}
  \includegraphics[width=\columnwidth]{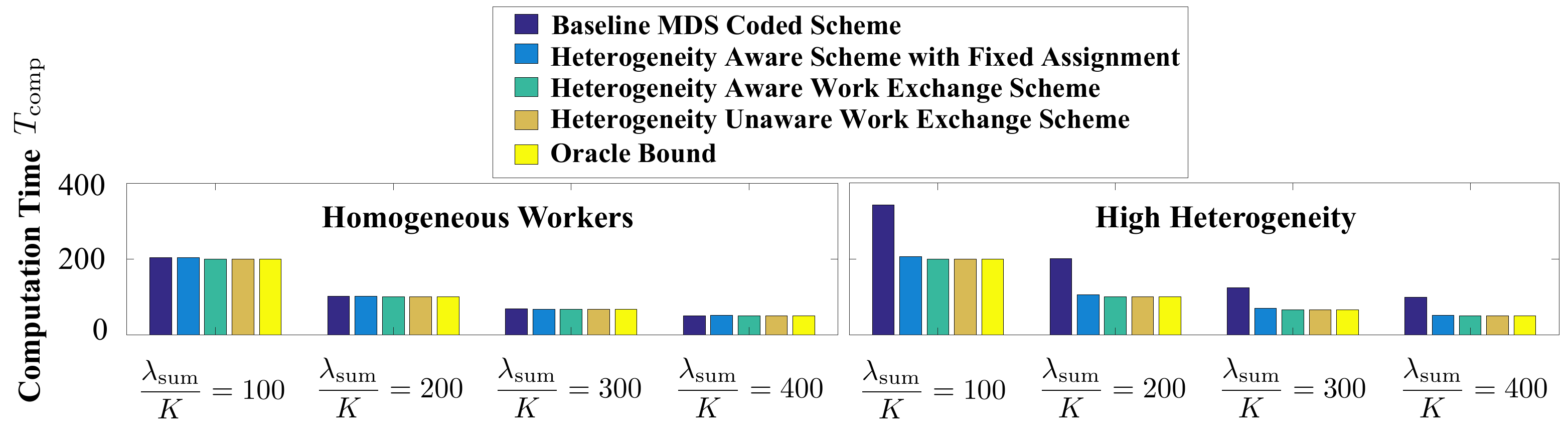}
\caption{{Computation time of $N=10^6$ data points over $K=50$ workers for different values of $\lambda_{\text{sum}}$, and two heterogeneity levels $\sigma^2 = 0$ and $\sigma^2 = \lambda^2_{\text{sum}}/(6K^2)$}.}
\label{fig:num1}
\vspace{-20pt}
  \end{center}
\end{figure}

\vspace{10pt}
\noindent\textbf{Comparison between the schemes:}
In Figure~\ref{fig:num1}, we find the mean computation time $T_{\text{comp}}$ for the work exchange schemes with and without heterogeneity knowledge, and compare them with 
the  optimized MDS coded scheme, the oracle bound,  and the
 heterogeneity aware scheme with fixed assignments  given in (\ref{eq:mean_MDS}), (\ref{eq:thm1}), and (\ref{eq:heter-aware}), respectively. We run the simulation for four different values of $\hat{\mu}=\lambda_{\text{sum}}/K$ (sample mean), and two heterogeneity levels: 1) $\sigma^2=0$, the homogeneous case; and 2) $\sigma^2=\hat{\mu}^2/6$, high heterogeneity.
We first notice, that both the work exchange schemes always have almost the same computation as the oracle bound $T_{\text{comp}}^{\text{oracle}}$, and a slightly higher values for the fixed assignment scheme. Second, as $\hat{\mu}$ increases, the value of $T_{\text{comp}}$ decreases for all the schemes as expected because of higher average speed for the workers. Most importantly, that $T_{\text{comp}}$ for the work exchange schemes are invariant of $\sigma^2$, unlike the value of $T_{\text{comp}}^{\text{MDS}}$ which equals $T_{\text{comp}}^{\text{oracle}}$ for the homogeneous case, and increases for higher value of $\sigma^2$. In fact, the redundancy introduced using MDS codes is unnecessary for the homogeneous case since all the workers perform the same, i.e., $L=K=50$ is the optimal value.

\begin{figure}[h!]
\nocaption
    \centering
    \begin{subfigure}[b]{0.48\textwidth}
        \includegraphics[width=\textwidth]{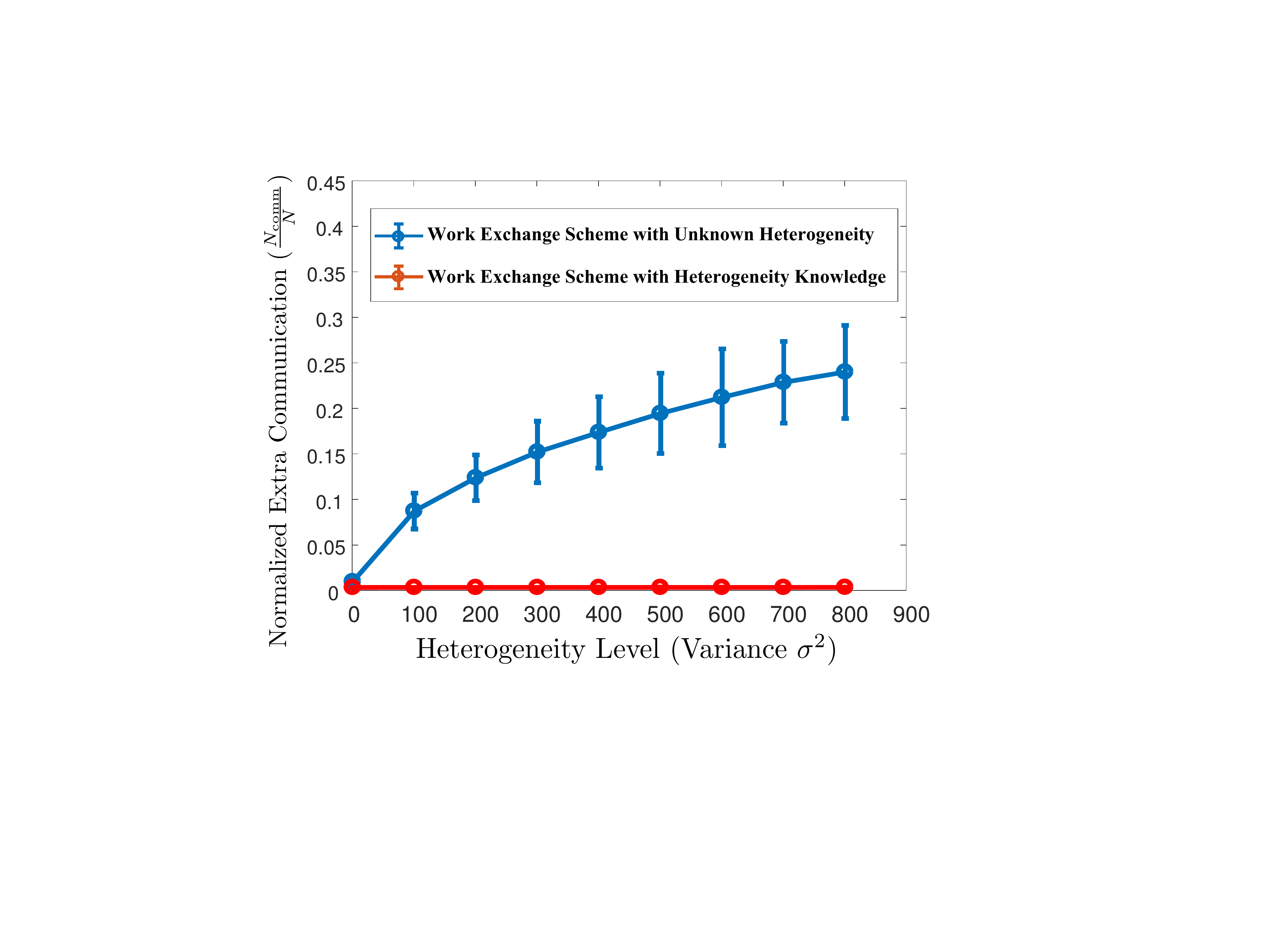}
\caption{Normalized extra communication  versus variance $(\sigma^2)$ for the work exchange scheme with and without the heterogeneity knowledge.\label{fig:num2}}
    \end{subfigure}
    ~
    \begin{subfigure}[b]{0.48\textwidth}
        \includegraphics[width=\textwidth]{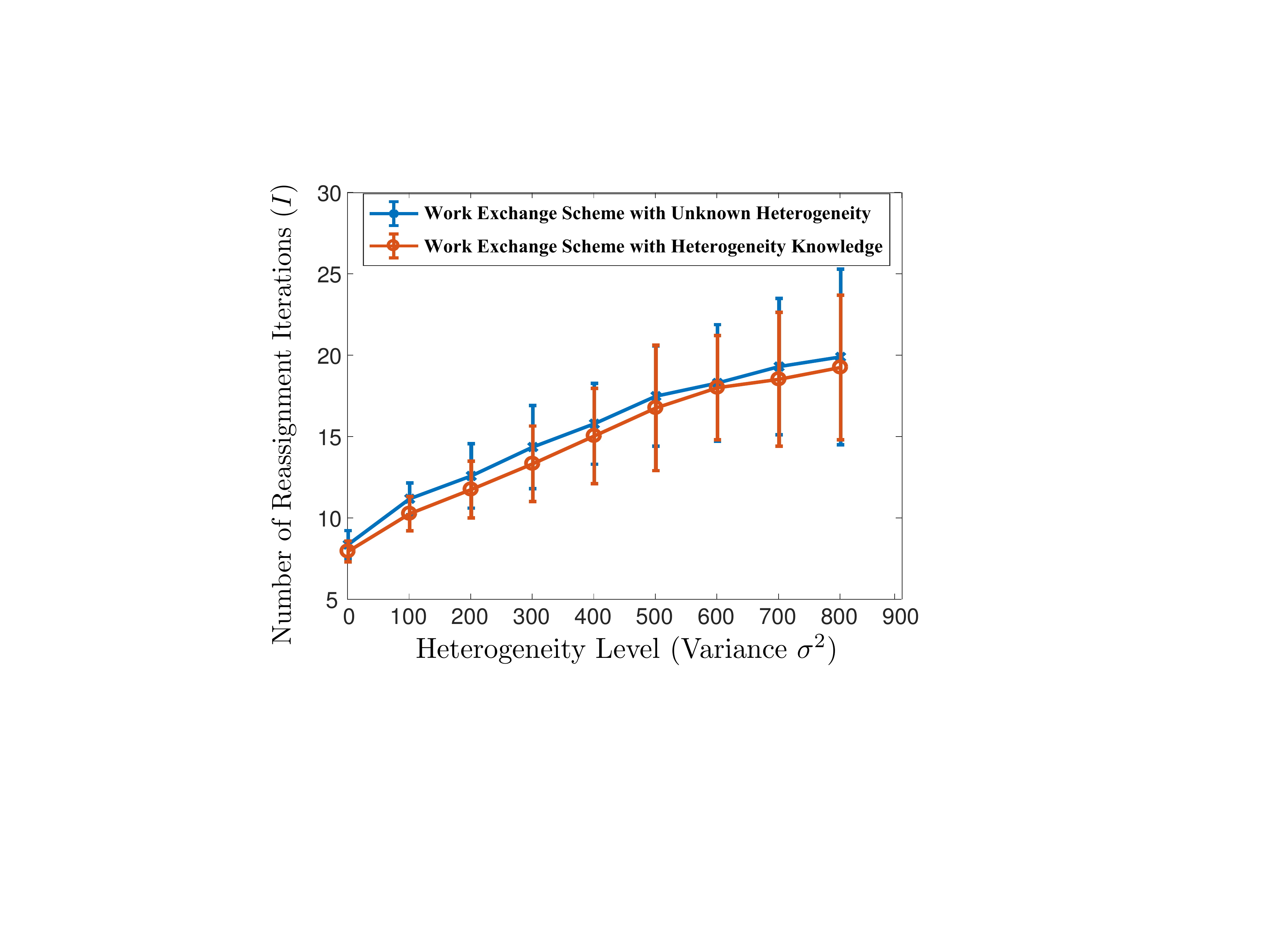}
        \caption{Number of reassignment iterations $(I)$ needed  versus variance $(\sigma^2)$ for the work exchange scheme with and without the heterogeneity knowledge.\label{fig:num3}}
    \end{subfigure}
    \vspace{-10pt}
\end{figure}

%

\vspace{10pt}
\noindent\textbf{Effect of heterogeneity level on  $N_{\text{comm}}$ and $I$:}
In Figures~\ref{fig:num2} and \ref{fig:num3}, we plot the average normalized extra communication $\frac{N_{\text{comm}}}{N}$, and the average number of reassignment iterations $I$, respectively, for the two work exchange schemes versus  $\sigma^2$. For each value of $\sigma^2$, we generate 50 random values of the heterogeneity set $\bar{\lambda}$ , and find the average quantities for each. The points falling on the solid lines are the average values over the 50 readings, and the vertical lines represent the error bars. We notice the following: 
1) With  heterogeneity knowledge, the average extra communication is almost zero for any $\sigma^2$, and that is due to the initial assignment in (\ref{eq:assign-WC}) which is the expected number of points to be done by each worker, and hence negligible extra communication is needed.
2) Without  heterogeneity knowledge, the average extra communication is zero for zero variance and that is according to (\ref{eq:expected-comm-sch2}), where for $\sigma^2=0$ we have $\lambda_k/\lambda_{\text{sum}}=1/K$, and hence $E[N_{\text{comm}}]=0$, and as $\sigma^2$ increases $E[N_{\text{comm}}]$ increases.
 3) The two proposed schemes have an increasing values of $I$ as $\sigma^2$ increases, with slight lower values for the known heterogeneity case.

\begin{figure}[h!]
  \begin{center}
  \includegraphics[width=0.5\columnwidth]{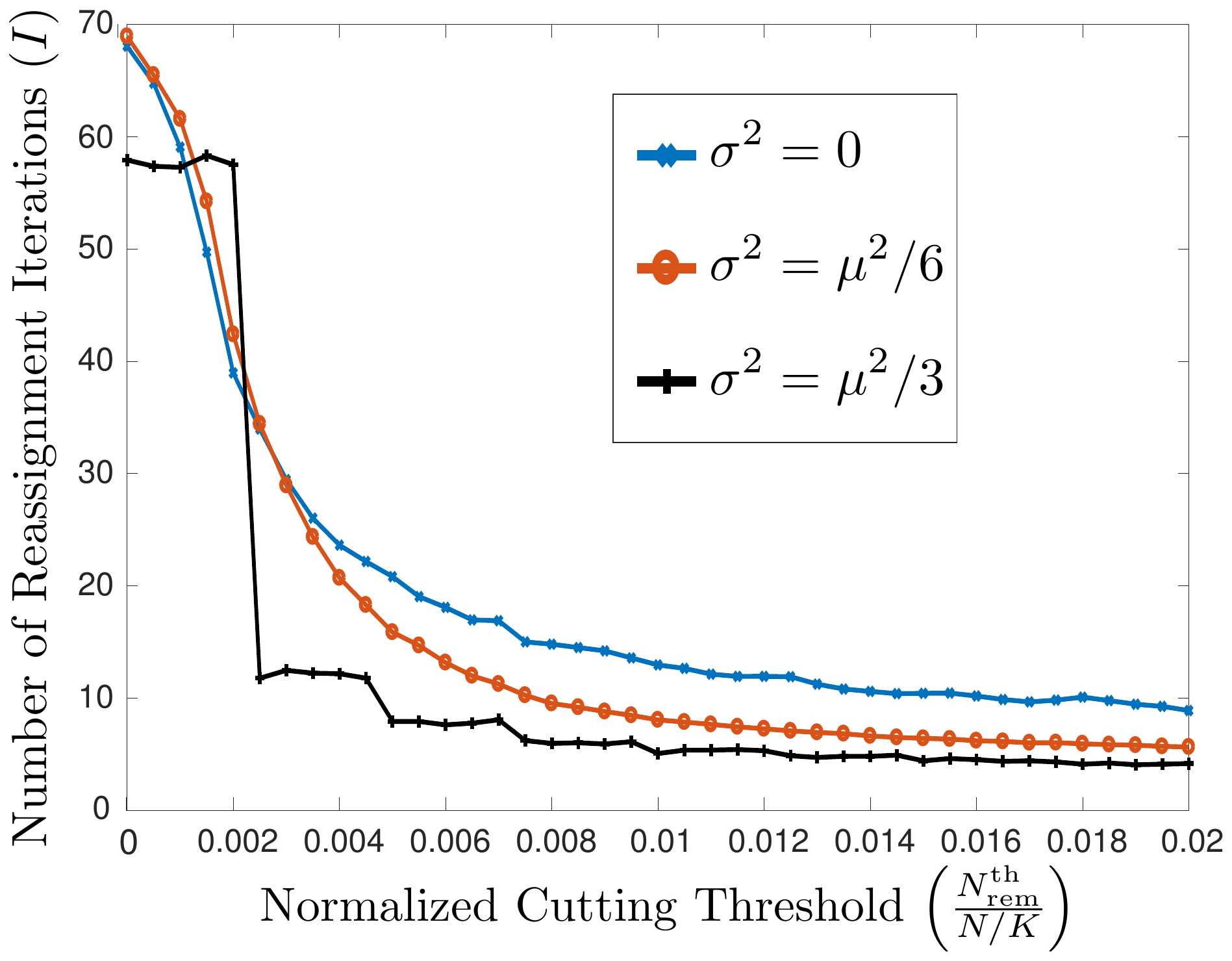}
\caption{Number of reassignments $(I)$ versus the cutting threshold $(N_{\text{rem}}^{\text{th}})$ for the work exchange scheme with unknown heterogeneity, and different values of  $\sigma^2$.\label{fig:num4}}
  \end{center}
  \vspace{-20pt}
\end{figure}

\vspace{10pt}
\noindent\textbf{Effect of adding a cutting threshold $N_{\text{rem}}^{\text{th}}$ on $I$:}
In case of unknown heterogeneity, we set $\mu=50$ and vary $N_{\text{rem}}^{\text{th}}$ in Figure~\ref{fig:num4}. We can see that as $N_{\text{rem}}^{\text{th}}$ increases, the value of $I$ decreases for different values of $\sigma^2$ as shown in the figure. Furthermore, we notice that after a certain point, the value of $I$ starts to saturate. Therefore, we set the default value of $N_{\text{rem}}^{\text{th}}$ to  $0.01 \frac{N}{K}$ for our example of $K=50$ workers, which is sufficient to get small mean number of reassignment iterations $I$ without compromising the mean computation time $T_{\text{comp}}$ as we can see in Figure~\ref{fig:num1} ($T_{\text{comp}}$ is still very close to $T_{\text{comp}}^{\text{oracle}}$).

\section{Conclusion and Future Directions}

In this paper, we focused on the problem of mitigating the impact of computational heterogeneity in large-scale distributed computation. 
We first obtained an oracle lower bound on the expected computation time as a function of the heterogeneity parameters, which has an interesting water-filling interpretation.  Inspired by the goal of approaching the oracle lower bound, we presented the idea of work exchange, in which the master node can perform iterative reassignments of computational tasks to the workers. We presented two variations of the work-exchange idea: the first in which heterogeneity knowledge is available; and a  second more realistic variation when the heterogeneity knowledge is unknown apriori, and must be estimated in an online manner.  These schemes reveal that there are fundamental tradeoffs between latency (time for computation), level of coordination and  communication between the master node and the workers.
Interestingly, it was found via simulation results that the computation time of the work-exchange scheme is very close to the oracle lower bound even when the heterogeneity is unknown apriori, in contrast to the baseline MDS coded computation scheme when the heterogeneity levels are high.
Moreover, the work exchange approach does not require linearity of the underlying computational task (unlike coded computation approaches), but rather assumes flexible divisions of the work
into smaller independent sub-tasks.

There are several interesting directions for future work which include 
a) adapting the work-exchange algorithms for sequential computations where the smaller sub-tasks depend on each other as in gradient descent and its stochastic variations;
b) imposing divisibility constraints where the computational task can only be divided into bounded number of sub-tasks; and
 c) obtain mathematical guarantees for the proposed work exchange schemes in terms of the expected latency, as a function of the heterogeneity levels, coordination and communication overheads. 

\label{sec:conclusion}

\bibliographystyle{IEEEtran}
\bibliography{./refs}

\end{document}